\def\emph{\textsl}
\newtheoremstyle{plainsl}%
	{\topsep}
	{\topsep}
	{\slshape} 
	{}
	{\normalfont\bfseries}
	{.}
	{ }
	{}
\theoremstyle{plainsl}
\newtheorem{theorem}{Theorem}[section]
\newtheorem{lemma}[theorem]{Lemma}
\newtheorem{remark}[theorem]{Remark}
\newtheorem{corollary}[theorem]{Corollary}
\theoremstyle{remark}
\newcommand\fref[1]{Figure~\ref{fig:#1}}
\newcommand\lref[1]{Lemma~\ref{lem:#1}}
\newcommand\tref[1]{Theorem~\ref{thm:#1}}
\newcommand\cref[1]{Corollary~\ref{cor:#1}}
\def\G{\Gamma}
\def\F{\Gamma'}
\def\t{{^{\!\top\!\!}}}
\def\1{{\mathbf 1}}
\def\u{\mathbf{u}}
\def\v{\mathbf{v}}
\DeclareMathOperator\tr{tr}
\newcommand\cx{{\mathbb C}}
\newcommand\fld{{\mathbb F}}
\newcommand\ch[2]{\varPhi(#1,#2)}
\newcommand\gch[3]{\varPhi_{#3}(#1,#2)}
\newcommand\comp[1]{{\mkern2mu\overline{\mkern-2mu#1}}}
\newcommand{\eeq}[3]{#1 \equiv_{#3} #2}
\newcommand{\neeq}[3]{#1 \not\equiv_{#3} #2}
\newcommand{\congr}[3]{#1 \equiv #2 \ \mathopen{(}\textnormal{mod} \ #3\mathclose{)}}
\newcommand\by{\times}
\renewcommand\span{\mathop{\mathrm{span}}}
\newcommand\cD{{\mathcal D}}
\newcommand\cI{{\mathcal I}}
\begin{document}
\title{Descriptive Complexity of the\\ Generalized Spectra of Graphs}

\author{
Aida Abiad
\thanks{\texttt{a.abiad.monge@tue.nl},  Department of Mathematics and Computer Science, Eindhoven University of Technology, The Netherlands}
\thanks{Department of Mathematics: Analysis, Logic and Discrete Mathematics, Ghent University, Belgium} \thanks{ Department of Mathematics and Data Science, Vrije Universiteit Brussel, Belgium} 
\and
Anuj Dawar 
\thanks{\texttt{anuj.dawar@cl.cam.ac.uk}, Department of Computer Science and Technology, University of Cambridge, UK}
\and 
Octavio Zapata
\thanks{\texttt{octavioz@ciencias.unam.mx},  Departamento de Matemáticas, Facultad de Ciencias, Universidad Nacional Autónoma de México, México}
} 
\date{}
\maketitle

\begin{abstract}
Two graphs are cospectral if their respective adjacency matrices have the same multiset of eigenvalues, and generalized cospectral if they are cospectral and so
are their complements. We study generalized cospectrality in relation to logical definability. We show that any pair of graphs that are elementary equivalent with respect to the three-variable counting first-order logic $C^3$ are generalized cospectral, and this is not the case with $C^2$, nor with any number of variables if we exclude counting quantifiers. Using this result we provide a new characterization of the well-known class of distance-regular graphs using the logic $C^3$. 
We also show that, for controllable graphs (it is known that almost all graphs are controllable), the elementary equivalence in $C^2$ coincides with isomorphism. 
\end{abstract}

\section{Introduction} 
\label{sec:intro}
Let $\G$ be a graph on $n$ vertices with adjacency matrix $A$. 
The \emph{characteristic polynomial} $\ch{\G}{x}$ of $\G$ is the characteristic polynomial of $A$:
\[
    \ch{\G}{x} := \det(xI - A).
\]
Two graphs are \emph{cospectral} if they have the same characteristic polynomial. 
Since the adjacency matrices of isomorphic graphs are permutation similar,  we see that isomorphic graphs are cospectral. 
For any real number $y$ 
the \emph{generalized characteristic polynomial} $\gch{\G}{x}{y}$ of the adjacency matrix $A$ is defined by
\[
    \gch{\G}{x}{y} :=  \det(xI - yJ - A),
\]
where $J$ and $I$ denote the all-one matrix and the identity matrix, respectively.  
Two graphs are \emph{generalized cospectral} (sometimes also called $\mathbb{R}$-\emph{cospectral}) if they have the same polynomial $\varPhi_y$ for all values of $y$, that is, if the have the same generalized spectrum.
Since $\gch{\G}{x}{y}$ is the characteristic polynomial of $(A + yJ)$, it follows that isomorphic graphs are generalized cospectral. 
Since
\[
    \ch{\G}{x} = \gch{\G}{x}{0}
\]
we see that generalized cospectral graphs are cospectral. 
If $\comp{\G}$ denotes the complement of $\G$, then 
\[
    \ch{\comp{\G}}{x} = (-1)^n\ \gch{\G}{-x-1}{-1}
\]
and hence generalized cospectral graphs  have cospectral complements. Johnson and Newman \cite{JN} showed that cospectral graphs with cospectral complements are generalized cospectral.

It is still an open question whether almost all graphs are characterized by their generalized characteristic polynomial. 
More precisely, a graph $\G$ is \emph{determined by its generalized spectrum} if every graph which is generalized cospectral with $\G$ is isomorphic to $\G$.
It has been conjectured that the proportion of graphs on $n$ vertices which are determined by their generalized spectra goes to 1 as $n$ tends to infinity. 
Wang et al.~\cite{W2013,WX,MLW,W2017} have a number of results supporting this conjecture. They gave sufficient conditions for a graph to be determined by its generalized spectrum, and the majority of their results are proven for a wide class of graphs, the so-called controllable graphs. In fact, it is known that almost all graphs are controllable \cite{o2016conjecture}.

In this paper we study these concepts from the perspective of logical definability. We prove that generalized cospectrality is implied by elementary equivalence with respect to the three-variable counting first-order logic $C^3$. 
We also show that this is not the case with $C^2$, nor with any number of variables if we exclude counting quantifiers.
Using this result we prove that  generalized cospectrality coincides with $C^3$-equivalence for a relevant class of regular graphs (including all distance-regular graphs). Finally we show that, for controllable graphs, $C^2$-equivalence coincides with isomorphism. The latter yields a new proof of the classical result of Immerman and Lander \cite{immerman1990describing} that almost all graphs are definable (up to isomorphism) in the logic $C^2$. Our work implies some of the results in the work of Dawar, Severini and Zapata \cite{DSZ} and presents them in a more general setup.

\section{Generalized Cospectrality}
\label{sec:Rcospectrality}
 
If $u$ and $v$ are vertices from a graph $\G$ with adjacency matrix $A$,  the number of walks in $\G$ from $u$ to $v$ with length $\ell$ is equal to $(A^{\ell})_{u,v}$. 
Thus the number of closed walks of length $\ell$ in $\G$ is $\tr(A^{\ell})$, and so 
\[
    \sum_{\ell \geq 0}\tr(A^\ell ) x^\ell
\]
is the generating function for closed walks in $\G$ counted by length. 
If $\varPhi^{\ \!\!\prime}(\G, x)$ denotes the derivative of $\ch{\G}{x}$ with respect to $x$, we find that 
\[
    \sum_{\ell \geq 0}\tr(A^\ell ) x^\ell = \frac{x^{-1}\ \varPhi^{\ \!\!\prime}(\G, x^{-1})}{\  \ch{\G}{x^{-1}}}.
\]
Therefore, two graphs are cospectral if and only if they have the same number of closed walks of each length.

If $\G$ and $\F$ are non-cospectral graphs, there exists some $\ell\geq 1$ such that the number of closed walks of length $\ell$ in $\G$ and $\F$ is not the same.  
This fact can be expressed in the language of first-order logic using counting quantifiers and not more than 3 variables. 
The language $L^k$ consists of the fragment of first-order logic in which only $k\geq 1$ distinct variables can be used.
We use $C^k$ to denote extension of $L^k$ with \emph{counting quantifiers}: for each non-negative integer $r$, we have a quantifier $\exists^{\geq r}$ whose semantics is defined so that $\exists^{\geq r} \phi(x)$ is true in a graph if there are at least $r$ distinct vertices which can be substituted for $x$ to make $\phi$ true. We use the abbreviation $\exists^{=r}\phi$ for the formula $\exists^{\geq r}\phi \land \lnot \exists^{\geq r+1}\phi$ that asserts the existence of exactly $r$ vertices satisfying $\phi$. 

Two graphs $\G$ and $\F$ are \emph{elementary equivalent} with respect to a first-order language $L$ (or \emph{$L$-equivalent}), just in case $\G \models \phi$ if and only if $\F \models \phi$ for any $L$-sentence $\phi$. 
In other words, $L$-equivalent graphs are precisely those graphs that cannot be distinguished by any property defined by a sentence in the language $L$. 
We write $\eeq{\G}{\F}{L}$ to denote that $\G$ and $\F$ are elementary equivalent with respect to $L$. 

The next result shows that $C^3$-equivalence implies generalized cospectrality. 

\begin{theorem} \label{thm:c3}
$C^3$-equivalent graphs are generalized cospectral. 
\end{theorem}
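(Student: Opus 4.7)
The plan is to combine two ingredients: (i) an explicit inductive construction of $C^3$-formulas that count walks of every length between every pair of vertices, and (ii) the Johnson--Newman theorem (quoted in the introduction) reducing generalized cospectrality to cospectrality together with cospectrality of the complements. Thus it suffices to show that $\eeq{\G}{\F}{C^3}$ implies both that $\G$ and $\F$ are cospectral and that $\comp{\G}$ and $\comp{\F}$ are cospectral.

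First I would build, by induction on $\ell$, a family of $C^3$-formulas $R_{\ell,r}$ with two free variables defining the relation ``the number of walks of length $\ell$ between the two arguments equals $r$'', in such a way that at each level the two endpoints and the bound midpoint occupy the three available slots $x,y,z$. The bases $\ell\in\{0,1\}$ are atomic. For the inductive step, every length-$(\ell+1)$ walk from $x$ to $y$ factors as a length-$\ell$ walk from $x$ to some neighbour $z$ of $y$, and since a single neighbour with more than $r$ length-$\ell$ walks from $x$ would push the total above $r$, the relation ``the walk count from $x$ to $y$ equals $r$'' is captured by the finite disjunction
\[
    \bigvee_{\substack{(n_1,\dots,n_r) \\ \sum_s s\cdot n_s = r}} \Bigl(\forall z\bigl(E(z,y)\to\textstyle\bigvee_{s=0}^{r} R_{\ell,s}(x,z)\bigr) \wedge \bigwedge_{s=1}^{r} \exists^{=n_s} z\bigl(E(z,y)\wedge R_{\ell,s}(x,z)\bigr)\Bigr).
\]
In each subformula $R_{\ell,s}(x,z)$ the free variables are $x,z$, and by induction its internal recursion uses $y$ as the newly bound midpoint; the three variable slots are rotated as the recursion unfolds and no fourth name is ever needed.

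Once these formulas are in hand, the $C^3$-sentence $\exists^{=p} x\,\exists y\bigl(x=y \wedge R_{\ell,r}(x,y)\bigr)$ asserts that exactly $p$ vertices start precisely $r$ closed walks of length $\ell$, so $C^3$-equivalence preserves the multiset $\{(A^\ell)_{v,v}\}_v$, and in particular $\tr(A^\ell) = \tr((A')^\ell)$ for every $\ell \geq 0$. Combined with the fact that $\eeq{\G}{\F}{C^3}$ also forces $\G$ and $\F$ to have the same number of vertices (already expressible in $C^1$), the generating-function identity recalled at the start of \sref{Rcospectrality} yields cospectrality of $\G$ and $\F$. For the complement, the edge relation of $\comp{\G}$ is the $C^2$-definable expression $\neg E(x,y) \wedge x \neq y$ in $\G$; substituting this in atomic positions converts any $C^3$-sentence about $\comp{\G}$ into a $C^3$-sentence about $\G$ of the same truth value, so $\eeq{\G}{\F}{C^3}$ implies $\eeq{\comp{\G}}{\comp{\F}}{C^3}$. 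Applying the cospectrality conclusion to the complements and invoking Johnson--Newman~\cite{JN} then gives that $\G$ and $\F$ are generalized cospectral.

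The main obstacle is controlling variable use in the inductive step: the natural argument splits a length-$(\ell+1)$ walk into a shorter walk plus a final edge, which naively wants a fourth variable for the midpoint. Cyclically rotating the three available slots $x,y,z$ between endpoints and midpoint at successive levels resolves this, while the $\forall z$ clause in the displayed formula is what keeps the outer disjunction over the tuples $(n_s)$ finite. Everything else --- extracting closed-walk counts from $R_{\ell,r}$ and transferring the argument to complements --- follows routinely.
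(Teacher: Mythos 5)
Your proposal is correct and follows essentially the same route as the paper: an inductive $C^3$ construction of walk-counting formulas (your decomposition peels the final edge at $y$ rather than the first edge at $x$, and your $\forall z$ guard plus exact-count conjuncts plays the role of the paper's disjunction over integer partitions), followed by extraction of closed-walk counts, the trace/generating-function argument for cospectrality, and the reduction of generalized cospectrality to cospectrality of the graphs and their complements via Johnson--Newman. The only substantive difference is that you spell out the complement step (relativizing $C^3$-sentences through the definable edge relation $\lnot E(x,y)\wedge x\neq y$) which the paper dispatches with a ``without loss of generality.''
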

 
\begin{proof}[\emph{Proof}]
For every non-negative integer $r$, we use counting quantifiers and not more than three variables to write a formula $\psi_{\ell}^r(x,y)$ that asserts the existence of exactly $r$ distinct walks of length $\ell \geq 1$ between $x$ and $y$.
We define $\psi_{\ell}^r(x,y)$ by induction on $\ell$. 
For $\ell = 1$, let
\[
    \psi_{1}^{0}(x,y) := \lnot E(x,y), \quad 
    \psi_{1}^{1}(x,y) := E(x,y) \quad \textnormal{and}\quad  \psi_{1}^{r}(x,y) := \bot\ \textnormal{ if }\ r > 1.
\]
Assuming that we have defined $\psi_{\ell}^r(x,y)$ for every $r$, we proceed with the definitions for $\ell + 1$:  
\[
    \psi_{\ell + 1}^{0}(x,y) := \forall z ( E(x,z) \to \psi_{\ell}^{0}(z,y) ).
\]
An \emph{integer partition} of $r>1$ is a sequence $(r_1^{a_1}, r_2^{a_2}, \dots, r_d^{a_d})$ of $d\leq r$ pairwise distinct positive integers (called the parts of $r$), 
such that for any $i$ ($1 \leq i \leq d$) the part $r_i\geq 0$ is repeated $a_i \geq 1$ times and $r = \sum_{i=1}^{d} a_i r_i$.
Let $\Pi_r$ denote the set of all integer partitions of $r$ and note that this is a finite set. 
Then 
\[
    \psi_{\ell + 1}^{r}(x,y) := \bigvee_{(r_1^{a_1},\ldots, r_d^{a_d})\in \Pi_r} [ \bigwedge_{i = 1}^{d}  ( \exists^{= a_i} x\ \psi_{\ell}^{r_i}(z,y)) \land  \exists^{\geq a}z\  E(x,z)  ]
\]
where $a = \sum_{i=1}^{d}a_i$. 
Here $a_i$ denotes the number of neighbors of $x$ for which there are exactly $r_i$ walks of length $\ell$ from each of them to $y$.
 
All of these definitions make use of at most three variables, and therefore each $\psi_{\ell}^{r}$ is a formula of the language $C^3$.
Thus we can write a $C^3$-sentence $\phi_{\ell}^{r}$ which is true in a graph when the total number of  closed walks of length $\ell$ is exactly $r$:
\[
    \phi_{\ell}^{r} := \bigvee_{(r_1^{a_1},\dots, r_d^{a_d})\in \Pi_r} \bigwedge_{i = 1}^{d} \exists^{=a_i}x \exists y\ (\psi_{\ell}^{r_i}(x,y) \land x = y).  
\] 
 
If $\G$ and $\F$ are not generalized cospectral graphs, then either $\G$ and $\F$ are non-cospectral or $\comp{\G}$ and $\comp{\F}$ are non-cospectral. 
Without loss of generality we assume that $\G$ and $\F$ do not have the same number of closed walks of length $\ell$ for some $\ell\geq 1$. 
Let $r$ be the number of closed walks of length $\ell$ in $\G$.   
Then $\G \models \phi_{\ell}^{r}$ and $\F\not\models \phi_{\ell}^{r}$
and therefore $\neeq{\G}{\F}{C^3}$.  
\end{proof}

Theorem \ref{thm:c3} is optimal in the sense that $C^2$-equivalence is not sufficient for generalized cospectrality.
To see this, let $\G$ be the disjoint union of two triangles and let $\F$ be a cycle of length 6. 
A simple pebble game argument shows that $\eeq{\G}{\F}{C^2}$ (see e.g.~\cite[Proposition 4.7.4]{immerman1990describing}). 
However, $\G$ and $\F$ are not generalized cospectral because they have a different number of triangles and the spectrum determines the number of triangles.

The converse of \tref{c3} does not hold in general. 
Suppose, for instance, that $\G$ and $\F$ are the left and right graphs from \fref{smallestRcospectralpair}, respectively. This is the smallest pair of nonisomorphic generalized cospectral graphs. 
Since $\G$ has one isolated vertex and $\F$ is connected, then $\G \models \exists x \forall y \lnot E(x,y)$ and $\G \not\models \exists x \forall y \lnot E(x,y)$. 
Hence $\neeq{\G}{\F}{C^2}$, and therefore $\neeq{\G}{\F}{C^k}$ for $k\geq 2$.

\begin{figure}[ht!]
\centering
\begin{tikzpicture}[x=0.55pt,y=0.55pt,yscale=-1,xscale=1, thick]
\draw   (123,100.5) -- (100.25,139.9) -- (54.75,139.9) -- (32,100.5) -- (54.75,61.1) -- (100.25,61.1) -- cycle ;
\draw    (54.75,61.1) -- (100.25,61.1) ;
\draw [shift={(100.25,61.1)}, rotate = 0] [color={rgb, 255:red, 0; green, 0; blue, 0 }  ][fill={rgb, 255:red, 0; green, 0; blue, 0 }  ][line width=0.75]      (0, 0) circle [x radius= 3.35, y radius= 3.35]   ;
\draw [shift={(54.75,61.1)}, rotate = 0] [color={rgb, 255:red, 0; green, 0; blue, 0 }  ][fill={rgb, 255:red, 0; green, 0; blue, 0 }  ][line width=0.75]      (0, 0) circle [x radius= 3.35, y radius= 3.35]   ;
\draw    (123,100.5) -- (100.25,139.9) ;
\draw [shift={(100.25,139.9)}, rotate = 120] [color={rgb, 255:red, 0; green, 0; blue, 0 }  ][fill={rgb, 255:red, 0; green, 0; blue, 0 }  ][line width=0.75]      (0, 0) circle [x radius= 3.35, y radius= 3.35]   ;
\draw [shift={(123,100.5)}, rotate = 120] [color={rgb, 255:red, 0; green, 0; blue, 0 }  ][fill={rgb, 255:red, 0; green, 0; blue, 0 }  ][line width=0.75]      (0, 0) circle [x radius= 3.35, y radius= 3.35]   ;
\draw    (32,100.5) -- (54.75,139.9) ;
\draw [shift={(54.75,139.9)}, rotate = 60] [color={rgb, 255:red, 0; green, 0; blue, 0 }  ][fill={rgb, 255:red, 0; green, 0; blue, 0 }  ][line width=0.75]      (0, 0) circle [x radius= 3.35, y radius= 3.35]   ;
\draw [shift={(32,100.5)}, rotate = 60] [color={rgb, 255:red, 0; green, 0; blue, 0 }  ][fill={rgb, 255:red, 0; green, 0; blue, 0 }  ][line width=0.75]      (0, 0) circle [x radius= 3.35, y radius= 3.35]   ;
\draw  [dash pattern={on 0.75pt off 750pt}]  (77.5,100.5) -- (100.25,139.9) ;
\draw [shift={(100.25,139.9)}, rotate = 60] [color={rgb, 255:red, 0; green, 0; blue, 0 }  ][fill={rgb, 255:red, 0; green, 0; blue, 0 }  ][line width=0.75]      (0, 0) circle [x radius= 3.35, y radius= 3.35]   ;
\draw [shift={(77.5,100.5)}, rotate = 60] [color={rgb, 255:red, 0; green, 0; blue, 0 }  ][fill={rgb, 255:red, 0; green, 0; blue, 0 }  ][line width=0.75]      (0, 0) circle [x radius= 3.35, y radius= 3.35]   ;

\draw    (250,100.5) -- (272.75,139.9) ;
\draw [shift={(272.75,139.9)}, rotate = 60] [color={rgb, 255:red, 0; green, 0; blue, 0 }  ][fill={rgb, 255:red, 0; green, 0; blue, 0 }  ][line width=0.75]      (0, 0) circle [x radius= 3.35, y radius= 3.35]   ;
\draw [shift={(250,100.5)}, rotate = 60] [color={rgb, 255:red, 0; green, 0; blue, 0 }  ][fill={rgb, 255:red, 0; green, 0; blue, 0 }  ][line width=0.75]      (0, 0) circle [x radius= 3.35, y radius= 3.35]   ;
\draw    (250,100.5) -- (227.25,139.9) ;
\draw [shift={(227.25,139.9)}, rotate = 120] [color={rgb, 255:red, 0; green, 0; blue, 0 }  ][fill={rgb, 255:red, 0; green, 0; blue, 0 }  ][line width=0.75]      (0, 0) circle [x radius= 3.35, y radius= 3.35]   ;
\draw [shift={(250,100.5)}, rotate = 120] [color={rgb, 255:red, 0; green, 0; blue, 0 }  ][fill={rgb, 255:red, 0; green, 0; blue, 0 }  ][line width=0.75]      (0, 0) circle [x radius= 3.35, y radius= 3.35]   ;
\draw    (227.25,139.9) -- (204.5,179.31) ;
\draw [shift={(204.5,179.31)}, rotate = 120] [color={rgb, 255:red, 0; green, 0; blue, 0 }  ][fill={rgb, 255:red, 0; green, 0; blue, 0 }  ][line width=0.75]      (0, 0) circle [x radius= 3.35, y radius= 3.35]   ;
\draw [shift={(227.25,139.9)}, rotate = 120] [color={rgb, 255:red, 0; green, 0; blue, 0 }  ][fill={rgb, 255:red, 0; green, 0; blue, 0 }  ][line width=0.75]      (0, 0) circle [x radius= 3.35, y radius= 3.35]   ;
\draw    (272.75,139.9) -- (295.5,179.31) ;
\draw [shift={(295.5,179.31)}, rotate = 60] [color={rgb, 255:red, 0; green, 0; blue, 0 }  ][fill={rgb, 255:red, 0; green, 0; blue, 0 }  ][line width=0.75]      (0, 0) circle [x radius= 3.35, y radius= 3.35]   ;
\draw [shift={(272.75,139.9)}, rotate = 60] [color={rgb, 255:red, 0; green, 0; blue, 0 }  ][fill={rgb, 255:red, 0; green, 0; blue, 0 }  ][line width=0.75]      (0, 0) circle [x radius= 3.35, y radius= 3.35]   ;
\draw    (250,61.1) -- (250,100.5) ;
\draw [shift={(250,100.5)}, rotate = 90] [color={rgb, 255:red, 0; green, 0; blue, 0 }  ][fill={rgb, 255:red, 0; green, 0; blue, 0 }  ][line width=0.75]      (0, 0) circle [x radius= 3.35, y radius= 3.35]   ;
\draw [shift={(250,61.1)}, rotate = 90] [color={rgb, 255:red, 0; green, 0; blue, 0 }  ][fill={rgb, 255:red, 0; green, 0; blue, 0 }  ][line width=0.75]      (0, 0) circle [x radius= 3.35, y radius= 3.35]   ;
\draw    (250,21.69) -- (250,61.1) ;
\draw [shift={(250,61.1)}, rotate = 90] [color={rgb, 255:red, 0; green, 0; blue, 0 }  ][fill={rgb, 255:red, 0; green, 0; blue, 0 }  ][line width=0.75]      (0, 0) circle [x radius= 3.35, y radius= 3.35]   ;
\draw [shift={(250,21.69)}, rotate = 90] [color={rgb, 255:red, 0; green, 0; blue, 0 }  ][fill={rgb, 255:red, 0; green, 0; blue, 0 }  ][line width=0.75]      (0, 0) circle [x radius= 3.35, y radius= 3.35]   ;
\end{tikzpicture}
\caption{Smallest pair of generalized cospectral graphs with respect to the adjacency matrix.}\label{fig:smallestRcospectralpair}
\end{figure}
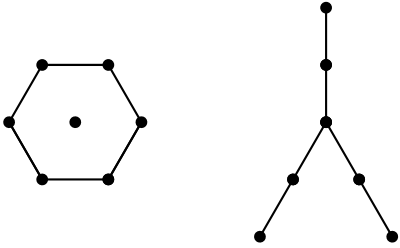

The use of counting is essential for the distinguishability of the generalized spectra in finite-variable logics.

\begin{theorem}\label{thm:nocounting}
There is a pair of nonisomorphic $L^k$-equivalent graphs which are not generalized cospectral for each positive integer $k$.  
\end{theorem}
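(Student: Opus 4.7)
The plan is to exhibit, for each $k \geq 1$, a concrete and very simple pair of graphs that are $L^k$-equivalent but not generalized cospectral. A uniform choice that works is to take the complete graphs $\Gamma_k := K_k$ and $\Gamma'_k := K_{k+1}$. These are trivially nonisomorphic and cannot be generalized cospectral (in fact they are not even cospectral), because the degree of the characteristic polynomial equals the number of vertices, which differs between the two.

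The only content in the proof is to verify that $\eeq{\Gamma_k}{\Gamma'_k}{L^k}$. I would do this via the standard $k$-pebble Ehrenfeucht--Fra\"iss\'e game characterization of $L^k$-equivalence without counting: two graphs are $L^k$-equivalent if and only if Duplicator has a winning strategy of unbounded length in the game with $k$ pebbles, where the position at any round is the partial map of size at most $k$ determined by the current pebble placements, and Duplicator wins so long as this map is always a partial isomorphism.

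Duplicator's strategy for this specific pair is immediate. In both $K_k$ and $K_{k+1}$, every two distinct vertices are adjacent, so any injective partial map between subsets of vertices of the two graphs is automatically a partial isomorphism. When Spoiler places the $i$-th pebble on a vertex $u$, Duplicator responds as follows: if some pebble $j \neq i$ already marks a vertex equal to $u$ on Spoiler's side, copy pebble $j$'s position on the other side; otherwise, pick any vertex of the opposite graph not currently carrying another pebble. At most $k-1$ other pebbles are in play, and each of $K_k$ and $K_{k+1}$ has at least $k$ vertices, so such a free vertex always exists. This invariant preserves injectivity of the partial map forever, establishing $L^k$-equivalence.

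There is no real obstacle to this argument; the substance of the theorem is simply the observation that once counting quantifiers are removed, finite-variable first-order logic cannot distinguish the orders of sufficiently large and sufficiently ``homogeneous'' graphs, so even graphs of different sizes become $L^k$-equivalent. This should be contrasted with \tref{c3}, where the presence of counting in $C^3$ is already strong enough to force the much finer invariant of generalized cospectrality.
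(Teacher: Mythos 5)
Your proof is correct, but it takes a much more elementary route than the paper. You take $K_k$ and $K_{k+1}$ and win the $k$-pebble game by the standard observation that in a complete graph with at least $k$ vertices any injective partial map is a partial isomorphism; this is a textbook fact and your game argument is sound (at most $k-1$ other pebbles are in play, so a fresh response vertex always exists, and the equality case is handled correctly). Non-cospectrality is then witnessed by the crudest possible invariant, the degree of the characteristic polynomial. The paper instead takes the Paley graph and the cubic Paley graph on the same number $q_k$ of vertices, invokes known results to get $\eeq{\G_k}{\F_k}{L^k}$, and distinguishes their spectra via their different valencies as strongly regular graphs. What your approach buys is self-containedness and simplicity; what the paper's buys is a strictly stronger separating example, since its two graphs agree on the number of vertices (and are even both vertex-transitive and strongly regular), so the failure of $L^k$ to capture the generalized spectrum is not merely the trivial failure of $L^k$ to count vertices. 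As a proof of the literal statement, which does not require the two graphs to have the same order, yours is complete.
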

\begin{proof}[\emph{Proof}]
For $k \geq 1$, let $n_k$ be the smallest integer greater than 
$2(k \log(4) + \log(k))/ \log(13)$ and let $q_k = 13^{n_k}$. 
Then $q_k > k^2 2^{4k}$. 
We write $\fld_{q_k}$ to denote the finite field of order $q_{k}$.

Let $\G_k$ and $\F_k$ be a pair of graphs with vertex set $\fld_{q_k}$ defined as follows. 
Two vertices $i$ and $j$ are adjacent in $\G_k$ if there exists a positive integer $x$ such that $\congr{x^2}{(i-j)}{q_k}$.
The vertices $i$ and $j$ are adjacent in $\F_k$ if there exists a positive integer $x$ such that $\congr{x^3}{(i-j)}{q_k}$. 
Hence, $\G_k$ is a Paley graph and $\F_k$ is a cubic Paley graph. 

It can be shown that with our choice of parameters $\eeq{\G_k}{\F_k}{L^k}$ (see \cite{blass1981paley} and also \cite{ananchuen2006cubic}). 
Since Paley graphs and cubic Paley graphs are strongly regular, the parameters (which include the degree) of these classes of graphs determine their characteristic polynomials. 
Thus $\G_k$ and $\F_k$ are cospectral if and only if they have  the same parameters. 
However, $\G_k$ is regular of degree $(q_k - 1)/2$ and $\F_k$ is regular of degree $(q_k - 1)/3$. 
Therefore $\G_k$ and $\F_k$ are not isomorphic and not generalized cospectral. 
\end{proof}

\section{Coherent Algebras}
\label{sec:coherent}
A matrix algebra is a real or complex vector space of matrices closed under matrix multiplication. 
The \emph{Schur product} of two $n\times n$ matrices $A$ and $B$ is the $n\times n$ matrix $A \circ B$ given by 
\[
    (A \circ B)_{i,j}  = (A)_{i,j}(B)_{i,j}.
\]
This is a commutative and associative operation with the identity $J$.
A \emph{coherent algebra} is a matrix algebra closed under conjugate-transpose and Schur multiplication which contains $I$ and $J$.  

If $A$ is the adjacency matrix of a graph $\G$, the \emph{adjacency algebra} of $\G$ is the matrix algebra $\cx[A]$ of polynomials in $A$ with complex coefficients. 
It is the unique minimal matrix algebra that contains $A$. 
The \emph{coherent closure} of $\G$ is the intersection of all matrix algebras that contain $\cx[A]$ which are closed under Schur product and conjugate-transpose. 
It is the unique minimal coherent algebra that contains $\cx[A]$.

A homomorphism of algebras is a ring homomorphism that commutes with scalar multiplication. 
An algebra homomorphism between two coherent algebras need not preserve the Schur product.
A \emph{coherent homomorphism} is an algebra homomorphism that commutes with Schur multiplication.

Any vector space of matrices which is closed under Schur product has a unique basis of matrices with entries 0 and 1, see for instance \cite[Theorem 2.6.1(i)]{brouwer1989}. 
An algebra homomorphism between two coherent algebras commutes with Schur product if and only if it maps the 01-basis of one algebra to the 01-basis of the other (see e.g. \cite[Proposition 2.3.17]{chen2019lectures}). 

Weisfeiler and Leman  \cite{weisfeiler1968reduction} provided an algorithm that computes in polynomial time the 01-basis of  the coherent closure of any graph. 
Two graphs are called \emph{WL-equivalent} if there is an invertible coherent homomorphism between their coherent closures.  
The adjacency matrix of a graph does not necessarily belong to the 01-basis of  its coherent closure. 
However, we have the following result which implies that WL-equivalent graphs are generalized cospectral. 
It follows from   \cite[Theorem 5.2]{cai1992optimal}. 

\begin{lemma} \label{lem:wl}
Two graphs are WL-equivalent  if and only if they are  $C^3$-equivalent. 
\end{lemma}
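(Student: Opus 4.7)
The plan is to reduce the statement to the theorem of Cai, Fürer and Immerman \cite[Theorem~5.2]{cai1992optimal} cited in the lemma, which asserts that $C^{k+1}$-equivalence coincides with indistinguishability by the $k$-dimensional Weisfeiler-Leman algorithm. Taking $k=2$ identifies $C^3$-equivalence with $2$-dimensional WL equivalence on the combinatorial side, so the whole task is to show that the algebraic notion of WL-equivalence in the lemma (invertible coherent homomorphism between coherent closures) coincides with this combinatorial notion.

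To make the match, I would set up a dictionary between the coherent closure of a graph $\Gamma$ and the stable coloring of $V(\Gamma)\times V(\Gamma)$ produced by $2$-dimensional Weisfeiler-Leman. The key claim is that the 01-basis of the coherent closure is exactly the family of characteristic matrices of the stable WL color classes on pairs. One direction is that the iterative WL refinement by walk counts is mirrored by closure under matrix multiplication, since the intersection numbers $(R_\alpha R_\beta)_{i,j}$ are precisely the quantities WL uses to decide when to split a class, while closure under Schur product and conjugate-transpose refines any matrix produced into its 01 indicator pieces and keeps the collection symmetric under swapping arguments; the other direction is that any coherent algebra containing $A$, $I$ and $J$ gives rise, via its 01-basis, to a partition refining the initial WL coloring (diagonal, edges, non-edges) and stable under the refinement rule, hence coarser than or equal to the stable WL partition. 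The two partitions therefore agree.

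Under this dictionary, an invertible coherent homomorphism $\varphi$ between coherent closures restricts to a bijection between the two 01-bases that preserves the structure constants both of matrix multiplication and of Schur multiplication. The elements $I$ and $A$ decompose uniquely as sums of basis elements corresponding, respectively, to the diagonal class and to the edge classes; this pins down the bijection so that it matches the initial WL coloring as well, which is exactly the statement that the stable $2$-dimensional WL colorings of $\Gamma$ and $\Gamma'$ are isomorphic as coherent configurations, i.e.~that $\Gamma$ and $\Gamma'$ are $2$-dim WL equivalent. The main obstacle is precisely this last step: a general algebra homomorphism between coherent algebras could in principle ignore the distinguished roles of $I$ and $A$, and one must argue that the coherence condition (preservation of the 01-basis, via the uniqueness result cited from \cite{brouwer1989,chen2019lectures}) together with invertibility forces $\varphi$ to respect those generators and hence capture combinatorial $2$-dim WL equivalence rather than some weaker algebraic relation. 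Once this is in hand, the equivalence with $C^3$-equivalence follows immediately from Cai-Fürer-Immerman.
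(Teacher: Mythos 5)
Your overall strategy matches the paper's, which gives no argument beyond citing \cite[Theorem 5.2]{cai1992optimal}: identify $C^3$-equivalence with indistinguishability by the $2$-dimensional Weisfeiler--Leman algorithm, and identify the stable $2$-WL colouring of $V(\Gamma)\times V(\Gamma)$ with the 01-basis of the coherent closure. That dictionary is correct and is the content of the original Weisfeiler--Leman construction, so on this front your proposal actually supplies more detail than the paper does.

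The gap is in the step you yourself single out as the main obstacle, and your proposed resolution of it is not available. Coherence plus invertibility do \emph{not} force the homomorphism to respect the generator $A$: an invertible coherent homomorphism must send $I$ to $I$ and $J$ to $J$ (the two multiplicative identities), but it is free to permute the off-diagonal basis elements in a way that fails to match the edge classes of one graph with the edge classes of the other. Concretely, the coherent closure of a graph $\Gamma$ contains $J-I-A$ and hence coincides with the coherent closure of $\comp{\Gamma}$, so the identity map is an invertible coherent homomorphism between the coherent closures of $\Gamma$ and $\comp{\Gamma}$; yet $K_3$ and $\comp{K_3}$ are already distinguished by the $C^2$-sentence $\exists x\exists y\, E(x,y)$. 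Hence the implication ``invertible coherent homomorphism exists $\Rightarrow$ $C^3$-equivalent'' fails unless the definition of WL-equivalence additionally requires the coherent isomorphism to map the adjacency matrix of one graph to that of the other (equivalently, to respect the initial colouring into diagonal, edge and non-edge classes). That extra condition is the standard convention, it is what is actually produced where the lemma is applied (the maps $\varPsi$ constructed for \tref{drg} and \tref{qpg} satisfy $\varPsi(A)=A'$), and it is what you need to state and use; it cannot be derived from coherence and invertibility alone. With that condition built into the definition, the rest of your dictionary argument goes through and the lemma reduces to Cai--F\"urer--Immerman exactly as you describe.
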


\section{Distance-Regular Graphs}
\label{sec:drg}

Let $\G$ be a graph on $n$ vertices with diameter $d$. 
We say that $\G$ is \emph{distance-regular} with intersection array $\{b_0,b_1,\dots, b_{d-1};c_1,c_2,\dots,c_d\}$ if $\G$ is regular of degree $k = b_0$, and if for any two vertices $u$ and $v$ at distance $i$, there are precisely $c_i$ neighbors of $v$ at distance $i - 1$ from $u$, and $b_i$ neighbors of $v$ at distance $i+1$ from $u$. 
It is well known that two distance-regular graphs are cospectral if and only if they have the same intersection array (see e.g. \cite{brouwer1989}). 
Since the only vertex at distance 0 from a given vertex is itself,  we have that $c_1 = 1$.
The five Platonic solids are simple examples: the tetrahedron has intersection array $\{3;1\}$, the octahedron $\{4,1; 1,4\}$, the cube $\{3,2,1; 1,2,3\}$, the icosahedron $\{5,2,1; 1,2,5\}$ and the dodecahedron $\{3,2,1,1,1; 1,1,1,2,3\}$.
Every strongly regular graph with parameters $\{n,k;a,c\}$ is a distance-regular graph with diameter at most 2 and  intersection array $\{k, k-a-1; 1, c\}$.

The $i$-th \emph{distance matrix} of $\G$ is the $n\by n$ 01-matrix $A_i$ with $uv$-entry equal to 1 if and only if the distance between vertex $u$ and vertex $v$ in $\G$ is $i$. 
By convention, we set $A_0 = I$ and $A_i = 0$ if $i$ is greater than $d$. 
Then $\sum_{i=0}^d A_i = J$ and so $\cD = \{A_0,A_1,\dots, A_d\}$ is a linearly independent set of symmetric 01-matrices. 
It follows that $\span(\cD)$ is a Schur-closed vector space of symmetric matrices which contains $I$ and $J$.

Since $A_1$ is the adjacency matrix of $\G$, if we set $a_i = k - b_{i} - c_{i}$ for $i= 0, \dots, d$ where $b_{-1}= b_d = c_0 =c_{d+1}=0$, then we have  
\[
    A_1A_i  =  b_{i-1} A_{i-1} + a_i A_i + c_{i+1} A_{i+1}.  
\]
If $i$ is less than $d$ then $c_{i+1}$ is non-zero, and thus $A_{i+1}$ is a linear combination of $A_{i-1}, A_i$ and $ A_1A_i $.
We see by induction that $A_i$ is a polynomial of degree $i$ in $A_1$ for $i = 1,\dots,d$. 
Again by induction, we find that $A_iA_j \in \span(\cD)$ for all $i$ and $j$ (see \cite[Lemma 4.2.2]{godsil1993algebraic}). 
Thus $\span(\cD)$ is closed under matrix multiplication and so it is a coherent algebra.

Consider two graphs $\G$ and $\F$ with adjacency matrices $A$ and $A'$, respectively. 
Since  $A$ and $A'$ are real symmetric matrices, they are diagonalizable. 
This implies that if $A$ and $A'$  have the same characteristic polynomial, then $Q^\t AQ=A'$ for some orthogonal matrix $Q$. 
It follows that cospectral graphs have similar adjacency matrices.  
In particular, if $\G$ and $\F$ are cospectral distance-regular graphs, then all their distance matrices are simultaneously similar. This is because the $i$-th distance matrix of $\G$ is a polynomial of degree $i$ in the adjacency matrix of $\G$, and the same is true for $\F$. 
So if $A_i$ and $A_i'$ are respectively the $i$-th distance matrices of $\G$ and $\F$, then there is a polynomial $p_i$ of degree $i$ such that
\[
    A'_i = p_i(A_1') = p_i(Q^\t A_1 Q) = Q^\t p_i(A_1) Q = Q^\t A_i Q
\]
for all $i$.

Suppose $\G$ and $\F$ are distance-regular graphs. 
Friedland \cite{friedland1989coherent} proved that if $\varPsi$ is an invertible coherent homomorphism between two coherent algebras, then there exists a unitary matrix $U$ such that $\varPsi(M) = U M U^{\ast}$ for all $M$. 
The converse is also true. 
In particular, there is an invertible coherent homomorphism $\varPsi$ between the coherent algebras generated by the distance matrices of $\G$ and $\F$ provided $\G$ and $\F$ are generalized cospectral.
By taking the restriction of $\varPsi$ to the coherent closures of $\G$ and $\F$, we obtain that $\G$ and $\F$ are WL-equivalent. Thus we obtain the following consequence of \lref{wl}. A proof of a stronger result will be given in Section \ref{sec:qpg}.

\begin{theorem}\label{thm:drg}
For distance-regular graphs, generalized cospectrality implies $C^3$-equivalence. 
\end{theorem}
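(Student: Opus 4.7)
The plan is to construct an invertible coherent homomorphism between the coherent closures of $\G$ and $\F$ and then invoke \lref{wl}. Most of the ingredients already appear in the two paragraphs preceding the statement; the task is to stitch them together.

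First, setting $y = 0$ shows that $\G$ and $\F$ are cospectral, and the classical fact (recalled in the excerpt) that cospectral distance-regular graphs share an intersection array gives that $\G$ and $\F$ have the same diameter $d$ and the same intersection numbers $b_i, a_i, c_i$. Iterating the three-term recurrence $A_1 A_i = b_{i-1}A_{i-1} + a_i A_i + c_{i+1}A_{i+1}$ expresses each $A_i$ as $p_i(A_1)$ and each $A_i'$ as $p_i(A_1')$ for the \emph{same} polynomial $p_i$, whose coefficients are determined by the common intersection array. Since $A_1$ and $A_1'$ are cospectral real symmetric matrices, the excerpt supplies an orthogonal $Q$ with $A_1' = Q\t A_1 Q$; applying $p_i$ yields $A_i' = Q\t A_i Q$ for each $i$.

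Next I would promote this similarity to a coherent homomorphism. Define $\varPsi(M) := Q\t M Q$ on $\span(\cD)$, where $\cD = \{A_0,\ldots,A_d\}$. By the previous step $\varPsi$ bijects $\cD$ onto the 01-basis $\cD' = \{A_0', \ldots, A_d'\}$ of $\span(\cD')$, and it is automatically an invertible algebra homomorphism since it is conjugation by an orthogonal matrix. The 01-basis characterization of coherent homomorphisms recalled in \sref{coherent} then upgrades $\varPsi$ to an invertible coherent homomorphism between $\span(\cD)$ and $\span(\cD')$. Since the coherent closure of $\G$ sits inside $\span(\cD)$ and is generated (as a coherent algebra) by $A_1$, while $\varPsi(A_1) = A_1'$ lies in the coherent closure of $\F$, restricting $\varPsi$ produces the desired invertible coherent homomorphism between the two coherent closures. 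Hence $\G$ and $\F$ are WL-equivalent, and \lref{wl} delivers $\eeq{\G}{\F}{C^3}$.

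The step I expect to require the most care is verifying that $\varPsi$ actually preserves Schur product: conjugation by a unitary matrix does not in general respect Schur multiplication, so the argument genuinely needs the 01-basis criterion together with the fact that the distance matrices $A_i$ and $A_i'$ form matching 01-bases of their Bose--Mesner algebras. Once this is established, the remainder of the proof is essentially a packaging of observations already present in the preceding discussion.
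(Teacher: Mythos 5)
Your proposal is correct and follows essentially the same route as the paper: produce the orthogonal similarity $Q$ from cospectrality, use the fact that the distance matrices are the same polynomials in the respective adjacency matrices to show $\varPsi(M)=Q\t MQ$ carries the 01-basis $\{A_0,\dots,A_d\}$ to $\{A_0',\dots,A_d'\}$, conclude that $\varPsi$ is an invertible coherent homomorphism between the coherent closures, and finish with \lref{wl}. You are in fact slightly more explicit than the paper on two points it glosses over, namely that the polynomials $p_i$ coincide for the two graphs because cospectral distance-regular graphs share an intersection array, and that Schur-product preservation must be derived from the 01-basis criterion rather than from conjugation alone.
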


Now assume $\G$ is a regular graph of degree $k$. Then we have that 
\[
    \ch{\comp{\G}}{x} = (-1)^n\ \frac{x - n + k + 1}{x + k + 1}\ \ch{\G}{-x-1}.
\]
Therefore, two regular graphs are cospectral if and only if they are generalized cospectral.
The unique distance-regular graph with intersection array $\{5,4,1,1;1,1,4,5\}$ is called the \emph{Wells graph}, see for instance \cite[Theorem 9.2.9]{brouwer1989}.
Van Dam and Haemers \cite{van2002spectral} showed that there is a regular graph cospectral with the Wells graph which is not distance-regular.
We have the following result which implies that the Wells graph and its cospectral mate are not $C^3$-equivalent.

\begin{theorem} \label{thm:intersection} 
For each intersection array $\cI$ there is a $C^3$-sentence $\phi_\cI$ such that a graph $\G$ satisfies $\phi_\cI$ if and only if $\G$ is distance-regular with intersection array $\cI$. 
\end{theorem}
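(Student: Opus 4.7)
The plan is to encode the defining conditions of distance-regularity with intersection array $\cI = \{b_0, \ldots, b_{d-1}; c_1, \ldots, c_d\}$ directly as a single $C^3$-sentence. The key ingredient is a $C^3$-formula $\delta_i(x, y)$, for each $i \geq 0$, that asserts ``the distance in $\G$ from $x$ to $y$ is exactly $i$''. Once this is available, the intersection numbers can be expressed almost verbatim by applying counting quantifiers to the neighbors of $y$ that lie at the appropriate distance from $x$.

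For the distance formulas, I would reuse the formulas $\psi_\ell^0(x, y)$ constructed in the proof of \tref{c3}, which already assert in $C^3$ (in fact in $L^3$) that there is no walk of length $\ell$ from $x$ to $y$. Setting $\delta_0(x, y) := (x = y)$ and, for $\ell \geq 1$,
\[
    \delta_\ell(x, y) := (x \neq y) \land \lnot \psi_\ell^0(x, y) \land \bigwedge_{j=1}^{\ell - 1} \psi_j^0(x, y),
\]
one obtains a formula that holds exactly when $\ell$ is the length of a shortest walk from $x$ to $y$; the conjunct $x \neq y$ is needed, since otherwise a non-isolated vertex would spuriously satisfy $\delta_\ell(x,x)$ for suitable even $\ell$. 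The sentence $\phi_\cI$ is then the conjunction of: (i) the diameter clauses $\forall x \forall y \bigvee_{i=0}^d \delta_i(x, y)$ together with $\exists x \exists y\, \delta_d(x, y)$, fixing the diameter to be exactly $d$; (ii) for each $0 \leq i \leq d-1$, the $b_i$-clause
\[
    \forall x \forall y\, \bigl(\delta_i(x, y) \to \exists^{=b_i} z\, (E(y, z) \land \delta_{i+1}(x, z))\bigr);
\]
and (iii) for each $1 \leq i \leq d$, the analogous $c_i$-clause with $\delta_{i-1}$ in place of $\delta_{i+1}$. Regularity of degree $k = b_0$ arises as the special case $i = 0$ of (ii).

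The principal obstacle is the variable bookkeeping needed to verify that the entire sentence lies in $C^3$, i.e.\ that only three variable names are ever used. Each $\psi_\ell^0$ involves only the names $\{x, y, z\}$, with $x, y$ free, so the same is true of $\delta_i(x, y)$. In the $b_i$- and $c_i$-clauses, $z$ serves as the counting variable, while inside the nested occurrence of $\delta_{i\pm 1}(x, z)$ the name $y$ can be reused as the bound variable (shadowing the outer $y$, which is no longer referenced). Once this is checked, both directions of the equivalence are immediate: the conjuncts of $\phi_\cI$ have been chosen to be exactly the clauses of the definition of being distance-regular with intersection array $\cI$.
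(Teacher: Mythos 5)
Your proof is correct and takes essentially the same route as the paper's: both build $C^3$ distance formulas $\delta_i(x,y)$ from walk-existence formulas (your $\lnot\psi_\ell^0$ is the paper's $\phi_\ell$, and your $\delta_\ell$ is literally its $\delta_\ell$ with the negations redistributed) and then express each intersection number by a counting quantifier over the neighbors of $y$ at the appropriate distance from $x$, with the same reuse of the third variable. Your explicit diameter clauses are a small refinement that the paper's final sentence omits, and they are in fact needed to get the ``only if'' direction of the stated equivalence, so keep them.
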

\begin{proof}[\emph{Proof}]
Let $\cI =  \{b_0,b_1,\dots, b_{d-1};c_1,c_2,\dots,c_d\}$. 
We write a $C^3$-sentence which is true in a graph $\G$ if and only if $\G$ 
 is distance-regular with intersection array $\cI$. 
We can express the existence of a path from $x$ to $y$ of length $i\geq 0$ in the three-variable counting logic $C^3$ as follows: 
\[
\phi_0(x,y) : = (x = y), \quad\quad  \phi_1(x,y) : = E(x,y)
\]
and
\[
\phi_{i+1}(x,y) : = \exists z [E(x,z) \land \exists x (z=x \land \phi_i(x,y)) ].
\]
Let $\delta_i(x,y)$ be the $C^3$-formula defined by
\[
\delta_i(x,y):= \phi_i(x,y) \land \bigwedge_{j=0}^{i-1} \lnot \phi_{j}(x,y). 
\]
Hence $\G \models \delta_i(x,y)$ precisely when $x$ and $y$ are vertices at distance $i$ in $\G$. 
If $r$ is a positive integer, we define the $C^3$-sentences 
\[
   \gamma^{r}_i : = \forall x\forall y[\delta_{i}(x,y) \to \exists^{=r} z(E(y,z) \land \delta_{i-1}(x,z))]\ \textnormal{ for }\ i \geq 1.
\]
and
\[
   \beta^{r}_i : = \forall x\forall y[\delta_{i}(x,y) \to \exists^{=r} z(E(y,z) \land \delta_{i+1}(x,z))] \ \textnormal{ for }\ i \geq 0.
\]
Then $\G \models \gamma^{c_i}_i \land \beta^{b_i}_i$ if and only if for any two vertices $x$ and $y$ at distance $i$ in $\G$, there are exactly $c_i$ vertices adjacent to $y$ at distance $i - 1$ from $x$, and $b_i$ vertices adjacent to $y$ at distance $i + 1$ from $x$.
Therefore 
\[
\G \models \bigwedge_{i=1}^{d} (\beta^{b_{i}}_{i-1} \land \gamma^{c_i}_i)
\]
whenever $\G$ is distance-regular with intersection array 
$\cI$.   
\end{proof}

One consequence of the previous result is that all graphs which are $C^3$-equivalent to the Wells graph are isomorphic to it. 
Indeed
\[
\G \models \beta^{5}_0\land \beta^{4}_1 \land \beta^{1}_2\land \beta^{1}_3\land \gamma^{1}_1 \land \gamma^{1}_2 \land \gamma^{4}_3 \land \gamma^{5}_4
\]
if and only if $\G$ is distance-regular with intersection array $\{5,4,1,1; 1,1,4,5\}$, and the Wells graph is unique up to isomorphism with this intersection array. 

\begin{corollary}
There is a pair of nonisomorphic generalized cospectral regular graphs which
are not $C^3$-equivalent.
\end{corollary}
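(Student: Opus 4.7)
The plan is to exhibit the Wells graph together with its cospectral mate constructed by Van Dam and Haemers~\cite{van2002spectral}. Call these two graphs $\G$ and $\F$. By hypothesis, $\G$ is distance-regular with intersection array $\cI = \{5,4,1,1;1,1,4,5\}$, while $\F$ is regular of the same degree $k=5$ and cospectral with $\G$, but $\F$ is not distance-regular.

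First, I would verify generalized cospectrality. Both $\G$ and $\F$ are regular of the same degree on the same number of vertices, so the identity
\[
\ch{\comp{\G}}{x} = (-1)^n\ \frac{x - n + k + 1}{x + k + 1}\ \ch{\G}{-x-1}
\]
displayed just before Theorem~\ref{thm:intersection} shows that their complements are cospectral as well. Combined with cospectrality of $\G$ and $\F$ themselves, the Johnson--Newman theorem cited in the introduction implies that $\G$ and $\F$ are generalized cospectral. Since one is distance-regular and the other is not, they cannot be isomorphic.

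Next, I would invoke Theorem~\ref{thm:intersection} applied to the intersection array $\cI$, which yields a $C^3$-sentence
\[
\phi_{\cI} := \bigwedge_{i=1}^{4} (\beta^{b_{i-1}}_{i-1} \land \gamma^{c_i}_i)
\]
such that a graph models $\phi_{\cI}$ if and only if it is distance-regular with intersection array $\cI$. Then $\G \models \phi_{\cI}$, whereas $\F \not\models \phi_{\cI}$ because $\F$ fails to be distance-regular. Hence $\neeq{\G}{\F}{C^3}$, and the pair $(\G,\F)$ witnesses the claim of the corollary.

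There is essentially no obstacle here beyond appealing to the existence of the Van Dam--Haemers cospectral mate of the Wells graph; all other ingredients (regularity forcing generalized cospectrality from cospectrality, and $C^3$-definability of a fixed intersection array) have been established immediately above.
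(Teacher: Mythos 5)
Your proposal is correct and follows essentially the same route as the paper: it takes the Wells graph and its Van Dam--Haemers cospectral mate, uses regularity to upgrade cospectrality to generalized cospectrality, and applies Theorem~\ref{thm:intersection} to separate the two graphs in $C^3$. The only cosmetic difference is that you conclude $\F \not\models \phi_\cI$ directly from $\F$ not being distance-regular, while the paper argues via the uniqueness of the Wells graph for its intersection array; both facts are available and the arguments are interchangeable.
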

\begin{proof}[\emph{Proof}]
Let $\G$ be the Wells graph and let $\F$ be a nonisomorphic cospectral mate of $\G$. 
Then $\G$ is distance-regular and so it is regular. 
Also $\F$ is regular. 
Thus $\G$ and $\F$ cannot be $C^3$-equivalent for then we would have that $\F$ is distance-regular with the same intersection array as $\G$, and this would imply that $\G$ and $\F$ are isomorphic. 
\end{proof}

\section{Quotient-Polynomial Graphs}
\label{sec:qpg}

Quotient-polynomial graphs were introduced by Fiol and Penji\'c \cite{fiol2021symmetric} as a generalization of orbit-polynomial and distance-regular graphs.

Let $\G$ be a graph on $n$ vertices with adjacency matrix $A$. 
For each eigenvalue $\theta$ of $A$, let $E_{\theta}$ be the matrix representing the orthogonal projection onto the eigenspace belonging to $\theta$. 
Then $A$ has the spectral decomposition 
\[
    A  = \sum_{\theta} \theta E_{\theta}.
\]

For any two vertices $u$ and $v$ from $\G$, let $\mathbf{w}_\ell (u,v)$ be the number of walks of length $\ell$ between $u$ and $v$ in $\G$. 
Thus 
\[
    \mathbf{w}_\ell (u,v) = (A^\ell)_{u,v} = \sum_{\theta} \theta^\ell (E_{\theta})_{u,v}.
\]
Suppose $A$ has $d+1$ distinct eigenvalues  (and hence the diameter of $\G$ is at least $d$). 
We say that a partition $\pi = (C_1, \dots, C_m)$ of $V(\G) \times V(\G)$ is \emph{walk-regular} if all the pairs $(u,v)$ in cell $C_i$ have the same vector 
\[
\mathbf{w}(u,v)= (\mathbf{w}_0 (u,v), \dots, \mathbf{w}_{d} (u,v))^\t.
\] 
It follows that all pairs of vertices from the same cell are at the same distance.

Given a walk-regular partition $\pi = (C_1, \dots, C_m)$ of $\G$, let $B_i$ be the  
$n\by n$ 01-matrix with $uv$-entry equal to 1 if and only if $(u,v) \in C_i$. 
Let $w_{\ell}(i)$ be the the number of walks of length $\ell$ between any two vertices $u$ and $v$ such that $(u,v)\in C_i$. 
Then, for all $\ell = 0,\dots, d$, we have 
\[
    A^\ell = \sum_{i = 1}^m w_{\ell}(i)B_i. 
\]
Thus  $\cx[A]= \span\{A^0,\dots,A^d\} \subset \span\{B_1,\dots,B_m\}$, and so it is necessary that $m\geq d+1$.  
We say that $\G$ is \emph{quotient-polynomial} if there are polynomials $p_i$ with degree at most $d$ such that $p_i(A) = B_i$ for all $i$. 
It follows that $\G$ is quotient-polynomial if and only if $\cx[A]=\span\{B_1,\dots,B_m\}$ (and hence $m = d+1$). 
Every distance-regular graph is quotient-polynomial, see for instance \cite[Proposition 5.1]{fiol2016quotient}.

Fiol and Penji\'c \cite{fiol2021symmetric} showed that a graph is quotient-polynomial if and only if its adjacency algebra is closed under Schur product. 
Since $\pi$ is a partition, we have  $\sum_{i=1}^m B_i = J$ and so $J \in \cx[A]$.
It follows that $\G$ is quotient-polynomial  if and only if $\cx[A]$ is a coherent algebra. 
Therefore, a graph is quotient-polynomial if and only if its coherent closure is its adjacency algebra.

\begin{theorem}\label{thm:qpg}
Two quotient-polynomial graphs are generalized cospectral if and only if they are $C^3$-equivalent.
\end{theorem}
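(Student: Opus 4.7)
The plan is to handle the two directions of the biconditional separately. The direction ``$C^3$-equivalent implies generalized cospectral'' is immediate from \tref{c3}, which makes no use of the quotient-polynomial hypothesis. The substantive content is the converse, and my plan is to mirror the proof of \tref{drg}, with the adjacency algebra $\cx[A]$ of a quotient-polynomial graph playing the role that was played there by the span of the distance matrices.

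For the converse, fix quotient-polynomial graphs $\G$ and $\F$ with adjacency matrices $A$ and $A'$ that are generalized cospectral, and proceed in four steps. First, by the Fiol--Penji\'c characterization recalled just before the theorem, the coherent closures of $\G$ and $\F$ coincide with $\cx[A]$ and $\cx[A']$ respectively. Second, because $J\in\cx[A]$ must commute with $A$, a short entrywise calculation forces $\G$ to be regular; the same is true of $\F$, and cospectrality forces the two valencies to agree. Hence $\1$ is a common eigenvector of $A$ and $A'$, the subspace $\1^{\perp}$ is invariant under both, and the restrictions of $A$ and $A'$ to $\1^{\perp}$ are cospectral real symmetric operators; I would orthogonally intertwine these restrictions and extend by $Q\1=\1$ to obtain a real orthogonal matrix $Q$ with $Q\t AQ=A'$, and hence $Q\t JQ=J$. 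Third, conjugation by $Q$ defines a unital $\ast$-algebra isomorphism $\Psi(M)=Q\t MQ$ from $\cx[A]$ onto $\cx[A']$ that sends $A^{\ell}$ to $(A')^{\ell}$ and $J$ to $J$. Fourth, invoking the converse of Friedland's theorem quoted just before \tref{drg}, such a unitary-conjugation-induced isomorphism between two coherent algebras is automatically an invertible coherent homomorphism; thus $\G$ and $\F$ are WL-equivalent, and \lref{wl} yields $\eeq{\G}{\F}{C^3}$.

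The step I expect to be most delicate is the application of Friedland's converse in step four. One must be sure that $\Psi$ preserves not just matrix multiplication and adjoints but also the Schur product, equivalently that it sends the 01-basis of $\cx[A]$ to that of $\cx[A']$. This is exactly where the quotient-polynomial hypothesis is essential: without it the adjacency algebra need not be coherent on either side, so there is no 01-basis available and the bridge between cospectrality and WL-equivalence collapses. Once that point is granted, the rest of the argument is standard linear algebra of real symmetric matrices combined with a direct appeal to \lref{wl}.
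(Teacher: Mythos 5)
Your overall architecture matches the paper's: one direction is immediate from \tref{c3}, and for the converse you produce an orthogonal $Q$ with $Q\t AQ=A'$, observe that $\varPsi(M)=Q\t MQ$ is an algebra isomorphism between the coherent closures $\cx[A]$ and $\cx[A']$, and then aim to conclude WL-equivalence and apply \lref{wl}. The side observations in your step two are correct but not needed for the paper's route: $J\in\cx[A]$ forces $AJ=JA$, hence regularity and connectedness, and you may then normalize $Q\1=\1$; the paper simply takes any orthogonal $Q$ furnished by cospectrality.

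The genuine gap is in your step four. The ``converse of Friedland's theorem'' cannot be invoked in the form you use it: an algebra isomorphism between two coherent algebras implemented by unitary conjugation is \emph{not} automatically a coherent homomorphism. For instance, the full matrix algebra $M_n(\cx)$ is a coherent algebra, and conjugating it by a generic unitary gives an algebra automorphism that destroys the $01$-basis $\{E_{ij}\}$ and does not commute with the Schur product. You correctly identify that the point to be checked is that $\varPsi$ carries the $01$-basis $\{B_i\}$ of $\cx[A]$ to the $01$-basis $\{B_i'\}$ of $\cx[A']$, but you then argue only that the quotient-polynomial hypothesis guarantees these bases \emph{exist}; you never show that $Q\t B_iQ$ is again a $01$-matrix, and nothing in your argument forces this. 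That verification is exactly where the paper's proof does its work: writing $B_i=p_i(A)$ and $B_i'=p_i(A')$ for the same polynomial $p_i$ of degree at most $d$, it computes
\[
B_i'=p_i(A')=p_i(Q\t AQ)=Q\t p_i(A)Q=Q\t B_iQ,
\]
so the $01$-basis is mapped to the $01$-basis and $\varPsi$ is a coherent homomorphism by the criterion recalled in Section~\ref{sec:coherent}. To repair your proof you need this (or an equivalent) identification of $\varPsi(B_i)$ with $B_i'$; the abstract appeal to Friedland does not supply it.
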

\begin{proof}[\emph{Proof}]
We show that for quotient-polynomial graphs generalized cospectrality implies $C^3$-equivalence. 
For the converse we already have \tref{c3}. 
Let $\G$ and $\F$ be graphs with adjacency matrices $A$ and $A'$, respectively. 
If $\G$ and $\F$ are generalized cospectral, then they are cospectral and so there exists an orthogonal matrix $Q$ such that $Q^\t A Q = A'$.
Let $B_i$ and $B_i'$ be the matrices associated with the walk-regular partitions of $\G$ and $\F$, respectively. 
Suppose $\G$ and $\F$ are quotient-polynomial. 
Then there is a polynomial $p_i$ of degree at most $d$ such that
\[
    B'_i = p_i(A') = p_i(Q^\t A Q) = Q^\t p_i(A) Q = Q^\t B_i Q
\]
for all $i$. 
Thus there is an invertible coherent homomorphism $\varPsi(M) = Q^\t M Q$ between the coherent closures of $\G$ and $\F$.
Therefore $\G$ and $\F$ are WL-equivalent, and hence \lref{wl} implies that $\G$ and $\F$ are $C^3$-equivalent.   
\end{proof}

Since quotient-polynomial graphs are a generalization of orbit-polynomial and distance-regular graphs, as a corollary of Theorem \ref{thm:qpg} we obtain Theorem \ref{thm:drg}.

\section{Controllable Graphs}
\label{sec:control}

Let $\G$ be a graph on $n$ vertices with adjacency matrix $A$, and let $\1$ denote the all-one vector of length $n$. 
The number of walks of length $\ell$ in $\G$ is equal to 
\[
    \tr(A^\ell J) = \1^\t A^\ell \1.
\]
Since 
\[
     \sum_{\ell \geq 0} A^\ell x^\ell = (I - xA)^{-1} 
\]
we have that the generating function for all walks in $\G$ is
\[
    \sum_{\ell \geq 0}\tr(A^\ell J) x^\ell = \1^\t (I - xA)^{-1} \1. 
\]
We say that two graphs are \emph{walk-equivalent} if their generating functions for all walks are equal. 
Since 
\[
    xI - (J -A) = (xI + A)(I-(xI+A)^{-1} J)
\] 
and 
\[
    I-(xI+A)^{-1} J = I-((xI+A)^{-1}\1 )\1^\t
\]
then using the identity $\det(\u\v^\t) = 1 - \v^\t \u$, we find that
\[
    \frac{\ch{\comp{\G}}{x-1}}{(-1)^{n}\ \ch{\G}{-x} } = 1 - \1^\t (xI + A)^{-1} \1  
\]
and so
\[
      \1^\t (I - xA)^{-1} \1 = \frac{1}{x} \left( \frac{\ch{\comp{\G}}{-x^{-1}-1}}{(-1)^{n}\ \ch{\G}{x^{-1}}} - 1 \right).
\]
Thus, if two graphs are generalized cospectral then they are walk-equivalent.

\begin{theorem} \label{thm:c2}
    $C^2$-equivalent graphs are walk-equivalent.
\end{theorem}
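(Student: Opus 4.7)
The plan is to prove the contrapositive: if $\G$ and $\F$ are not walk-equivalent, then they are not $C^2$-equivalent. Since the coefficient of $x^\ell$ in $\1^\top(I-xA)^{-1}\1$ is $\1^\top A^\ell \1 = \sum_v w_\ell(v)$, where $w_\ell(v)$ denotes the number of walks of length $\ell$ starting at $v$, walk-inequivalence means that for some $\ell$ the multisets $\{w_\ell(v) : v \in V(\G)\}$ and $\{w_\ell(v) : v \in V(\F)\}$ differ. I aim to detect such differences by $C^2$-sentences.

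The idea is to mimic the construction of $\psi_\ell^r(x,y)$ in the proof of \tref{c3}, but since we only need to track the starting endpoint of a walk rather than a fixed terminus, one free variable is enough. Concretely, for each $\ell \geq 0$ and $r \geq 0$ I would define a $C^2$-formula $\chi_\ell^r(x)$ asserting that exactly $r$ walks of length $\ell$ start at $x$. Base: $\chi_0^1(x) := (x = x)$ and $\chi_0^r(x) := \bot$ for $r \neq 1$. Inductive step: a walk of length $\ell+1$ from $x$ consists of an edge to a neighbor $y$ followed by a walk of length $\ell$ from $y$, so I set
\[
\chi_{\ell+1}^r(x) := \bigvee_{(r_1^{a_1},\dots,r_d^{a_d}) \in \Pi_r} \bigwedge_{i=1}^d \exists^{=a_i} y\,(E(x,y) \land \chi_\ell^{r_i}(y)),
\]
where the disjunction ranges over the integer partitions of $r$ recording the distribution of $\ell$-walk counts among the neighbors of $x$. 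The recursive call $\chi_\ell^{r_i}(y)$ is the formula obtained from $\chi_\ell^{r_i}(x)$ by interchanging the variables $x$ and $y$; this is a routine alpha-conversion that keeps everything within the two-variable fragment.

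Assuming these definitions, the proof concludes as follows. If $\G$ and $\F$ are not walk-equivalent, choose $\ell$ so that the $\ell$-walk-count multisets differ; then choose a value $s$ whose multiplicities in the two multisets differ, say it is $k$ in $\G$. The $C^2$-sentence $\exists^{=k} x\, \chi_\ell^s(x)$ holds in $\G$ but fails in $\F$, witnessing $\neeq{\G}{\F}{C^2}$.

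The only technical point to verify is that the inductive substitution stays within two variables, and this is exactly where the savings over the $C^3$ proof come from. In \tref{c3} one had to simultaneously track a fixed endpoint and step a cursor through intermediate neighbors, forcing three variables; but here the walk count depends only on the starting vertex, so a single moving cursor $y$ — reusable as $x$ in the recursive call — suffices.
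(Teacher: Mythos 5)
Your proof follows essentially the same route as the paper's: an induction on $\ell$ building $C^2$-formulas that count the walks of length $\ell$ from a vertex by means of integer partitions of the count distributed over its neighbors, followed by a contrapositive argument using a counting sentence over vertices (your top-level sentence $\exists^{=k}x\,\chi_\ell^s(x)$ is a simpler but equally valid choice than the paper's $\phi_\ell^r$). Your inductive clause $\exists^{=a_i}y\,(E(x,y)\land\chi_\ell^{r_i}(y))$ correctly relativizes the counting quantifier to the neighbors of $x$ (the paper's displayed formula omits the guard $E(x,y)$, evidently a typo), though both versions still need one further conjunct asserting that every neighbor of $x$ with a positive $\ell$-walk count has its count among the listed parts $r_1,\dots,r_d$ --- for example $\forall y\,(E(x,y)\to\chi_\ell^0(y)\lor\bigvee_{i=1}^d\chi_\ell^{r_i}(y))$ --- since otherwise a vertex with an additional neighbor whose count lies outside the parts would satisfy the disjunct for a value of $r$ smaller than its true walk count.
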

\begin{proof}[\emph{Proof}]
We write a $C^2$-formula $\psi^r_{\ell}(x)$ such that for any graph $\G$ and any vertex $u$ from $\G$, we have 
\[
    \G \models \psi^r_{\ell}(u)
\]
if and only if there are $r\geq 0$ walks of length $\ell \geq 1$ in $\G$ starting at $u$.
We proceed by induction on $\ell$.
If $\ell = 1$ then we let
\[
\psi^0_{1}(x) := \forall y\lnot E(x,y) \qquad \textnormal{and} \qquad 
\psi^r_{1}(x) := \exists^{=r} y\  E(x,y) \quad \textnormal{for}\ r>0.
\]
For $\ell > 1$, we define
\[
\psi^0_{\ell+1}(x) := \forall y ( E(x,y) \to \psi^0_{\ell}(y) )
\]
and if $r >0$ then 
\[
\psi^r_{\ell+1}(x) := \bigvee_{(r_1^{a_1},\dots,r_d^{a_d})\in \Pi_r} [ ( \bigwedge_{i = 1}^d \exists^{=a_i}y\ \psi_{\ell}^{r_i}(y)) \land \exists^{\geq a}y\ E(x,y)]
\]
where $\Pi_r$ is the set of integer partitions $(r_1^{a_1},\dots,r_d^{a_d})$ of 
$r$ (so $r= \sum_{i=1}^d a_ir_i$) and $a = \sum_{i=1}^d a_i$. 
With this notation we define the $C^2$-sentence $\phi_{\ell}^{r}$ as follows:
\[
    \phi_{\ell}^{r} := \bigvee_{(r_1^{a_1},\dots,r_d^{a_d})\in \Pi_r}\  \bigwedge_{i = 1}^d \exists^{=a_i}x\ \psi_{\ell}^{r_i}(x).
\]
By definition $\G \models  \phi_{\ell}^{r}$ if and only if there are $r$ walks of length $\ell$ in $\G$. 

Now suppose that $\G$ and $\F$ are two graphs which are not walk-equivalent. 
Then $\G$ and $\F$ have a different number of walks of length $\ell$ for some $\ell$. 
Let $r$ be the number of walks of length $\ell$ in $\G$. 
We have $\G \models  \phi_{\ell}^{r}$ and $\F \not\models  \phi_{\ell}^{r}$.
Therefore $\G \not\equiv_{C^2} \F$.
\end{proof}

\begin{corollary}
    Two regular graphs are walk-equivalent if and only if they are $C^2$-equivalent.
\end{corollary}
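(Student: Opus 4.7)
The plan is to combine \tref{c2}, which gives the direction $C^2$-equivalent implies walk-equivalent for arbitrary graphs, with a short direct argument for the converse restricted to regular graphs. Suppose $\G$ is $k$-regular on $n$ vertices and $\F$ is $k'$-regular on $n'$ vertices, and assume that $\G$ and $\F$ are walk-equivalent. Since $A\1 = k\1$, the number of walks of length $\ell$ in $\G$ equals $\1\t A^\ell \1 = nk^\ell$, and analogously $n'(k')^\ell$ for $\F$. Matching coefficients at $\ell=0$ forces $n = n'$, and then $\ell = 1$ forces $k = k'$. It therefore suffices to show that any two $k$-regular graphs on $n$ vertices are $C^2$-equivalent.

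For this step I would invoke the classical correspondence between $C^2$-equivalence and equivalence under color refinement (the 1-dimensional Weisfeiler--Leman algorithm), which is the $C^2$-analogue of \lref{wl}. Color refinement assigns the same color to every vertex of a $k$-regular graph, and this coloring is stable after one round. Consequently, two $k$-regular graphs on $n$ vertices yield identical stable color-count histograms (a single color occurring $n$ times), so they are equivalent under color refinement and hence $C^2$-equivalent.

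For a self-contained argument one can instead use the 2-pebble counting game characterization of $C^2$-equivalence, exhibiting a winning Duplicator strategy: any vertex Spoiler pebbles in one graph can be matched in the other by a vertex of the same ``type'' relative to the already-pebbled vertex (equal to it, adjacent to it, or non-adjacent to it), which is always possible because both graphs are $k$-regular on the same number of vertices. I do not anticipate any significant obstacle; the essence is simply that $k$-regularity together with matching orders leaves no feature accessible to $C^2$ that could distinguish the graphs.
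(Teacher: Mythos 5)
Your proof is correct and follows essentially the same route as the paper: one direction is \tref{c2}, and the converse reduces to the two facts that the walk counts of a regular graph are determined by its order and degree (so walk-equivalence forces equal parameters) and that two $k$-regular graphs on $n$ vertices are $C^2$-equivalent. The only differences are presentational: you compute $\1^\t A^\ell \1 = nk^\ell$ directly rather than via the spectral decomposition of the walk generating function, and you explicitly justify the second fact (via color refinement or the two-pebble counting game), a step the paper uses but leaves implicit.
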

\begin{proof}[\emph{Proof}]
Let $\G$ be a $k$-regular graph on $n$ vertices with adjacency matrix $A$. 
 If $A$ has spectral decomposition 
 \[
        A = \sum_{\theta} \theta E_\theta
 \]
 then $A\1 = k\1$ and $E_k = n^{-1}\1\1^\t$. 
 Thus $E_\theta E_k =  E_\theta \1\1^\t  = 0$ and so $\1^\t E_\theta \1 = 0$ for all $\theta \neq k$. 
 Since 
 \[
(I - xA)^{-1} = \sum_{\theta} \frac{1}{1 - x\theta} E_\theta
 \]
we find that the generating function for all walks in $\G$ is
\[
    \1^\t (I - xA)^{-1}\1  =
    \frac{n}{1-xk}.  
\]
Hence,  any two $k$-regular graphs on $n$ vertices are walk-equivalent. 
Therefore if $\G$ and $\F$ are two regular graphs such that $\G \not\equiv_{C^2}\F$, then either the number of vertices in $\G$ is different than the number of vertices in $\F$ or the degree of $\G$ is different than the degree of $\F$.
In either case,  $\G$ and $\F$ are not walk-equivalent. 
The converse follows immediately from \tref{c2}.
\end{proof}

The \emph{walk matrix} of $\G$ is the $n\by n$ matrix 
\[
    W_\G  =  \big( \1\quad A\1\quad \cdots \quad A^{n-1}\1 \big).
\] 
Our next result combined with \tref{c3} and the fact that $C^3$-equivalent graphs are necessarily $C^2$-equivalent implies \cite[Lemma 3]{van2009developments}. 

\begin{lemma} \label{lem:walk}
    Let  $\G$ and $\F$ be $C^2$-equivalent graphs. 
    Then \[W^{\t}_{\G} W_{\G} = W^{\t}_{\F} W_{\F}.\]
\end{lemma}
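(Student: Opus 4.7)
The plan is to observe that $W_\Gamma^{\!\top\!} W_\Gamma$ is a Gram matrix whose entries are precisely the numbers of walks of various lengths in $\Gamma$, and then invoke \tref{c2} (together with the fact that $C^2$ fixes the number of vertices) to conclude that these entries agree with the corresponding ones for $\F$.

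First I would argue that $\G$ and $\F$ have the same number of vertices. The sentence $\exists^{=n} x\,(x = x)$ lies in $C^2$ (indeed in $C^1$), so $\eeq{\G}{\F}{C^2}$ forces $|V(\G)| = |V(\F)| = n$; in particular, $W_\G$ and $W_\F$ are both $n \by n$ matrices, and so the claimed equality of $n\by n$ matrices is at least type-correct.

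Next I would compute the entries of $W_\G^{\!\top\!} W_\G$ directly. Writing $W_\G = (\1, A\1, \ldots, A^{n-1}\1)$ where $A$ is the adjacency matrix of $\G$, for indices $0 \leq i, j \leq n-1$ we have
\[
    (W_\G^{\!\top\!} W_\G)_{i,j} = (A^i \1)^{\!\top\!}(A^j \1) = \1^{\!\top\!} A^{i+j} \1.
\]
By the identity $\1^{\!\top\!} A^\ell \1 = \tr(A^\ell J)$ recorded at the start of \sref{control}, this entry is exactly the total number of walks of length $i+j$ in $\G$. The same calculation applies to $\F$, so $(W_\F^{\!\top\!} W_\F)_{i,j}$ equals the number of walks of length $i+j$ in $\F$.

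Finally, by \tref{c2}, $C^2$-equivalent graphs are walk-equivalent, so $\G$ and $\F$ have the same number of walks of length $\ell$ for every $\ell \geq 0$, and in particular for every $\ell \in \{0, 1, \ldots, 2n-2\}$. Comparing entries termwise yields $W_\G^{\!\top\!} W_\G = W_\F^{\!\top\!} W_\F$. There is no real obstacle here: the content of the lemma lies entirely in \tref{c2}, and the present proof is a one-line Gram-matrix identification once walk-equivalence is in hand.
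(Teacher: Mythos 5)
Your proof is correct, but it takes a genuinely different route from the paper's. You identify the entries of the Gram matrix directly: $(W_\G\t W_\G)_{i,j} = \1\t A^{i+j}\1$ is the total number of walks of length $i+j$ in $\G$ (using symmetry of $A$), and then Theorem~\ref{thm:c2} (walk-equivalence) finishes the job entry by entry; this is clean, and even the agreement on the number of vertices already follows from the $\ell=0$ coefficient of the walk generating function, so your separate $C^1$ argument is a harmless redundancy. The paper instead argues at the level of individual vertices: using the formulas $\psi^r_\ell(x)$ it matches up vertices of $\G$ and $\F$ having identical walk-count profiles, obtaining a permutation matrix $P$ with $PW_\G = W_\F$, from which $W_\F\t W_\F = W_\G\t P\t P W_\G = W_\G\t W_\G$. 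The trade-off is that the paper's stronger intermediate fact $W_\G = P\t W_\F$ is explicitly reused in the proof of Corollary~\ref{cor:iso} (to show $Q = W_\G W_\F^{-1} = P\t$ is a permutation matrix), so while your argument fully proves the lemma as stated, substituting it for the paper's proof would require reworking that later corollary. It is also worth noting that your route arguably places the burden where it belongs: the paper's bijection step, as written, only gives equinumerosity of the sets $\{v : \G\models\psi^r_\ell(v)\}$ for each fixed pair $(r,\ell)$ and needs a small additional argument (conjunctions of the $\psi^r_\ell$, i.e., matching on full walk-count types) to produce a single simultaneous bijection, whereas your entrywise computation avoids that subtlety entirely.
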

\begin{proof}[\emph{Proof}]
We shall use the $C^2$-formulas $\psi^{r}_{\ell}(x)$ which are defined in the proof of \tref{c2}. 
Recall that $\psi^{r}_{\ell}(x)$ asserts the existence of exactly $r$ walks of length $\ell$ starting at $x$.
Since $\G$ and $\F$ are $C^2$-equivalent, 
there is a vertex $v$ from $\G$ such that $\G \models \psi^{r}_{\ell}(v)$ if and only if there is a vertex $v'$ from $\F$ such that $\F \models \psi^{r}_{\ell}(v')$. 
Hence the mapping $v \mapsto v'$ is a bijection between the sets
\[
\{v \in V(\G): \G \models\psi^{r}_{\ell}(v)\} \qquad \textnormal{and} \qquad \{v' \in V(\F): \F \models\psi^{r}_{\ell}(v')\}
\]
for each $r\geq 0$ and $\ell \geq 1$. 
Since the rows of the walk matrix of a graph are indexed by the vertices of the graph, it follows that $W_{\G}$ and $W_{\F}$ are equal up to a permutation of the rows induced by the above bijection. 
Thus there is a permutation matrix $P$ such that $P W_{\G} = W_{\F}$. 
Then 
\[
W^{\t}_{\F} W_{\F} = (P W_{\G})^{\t}(P W_{\G}) =  W_{\G}^{\t}P^{\t}P W_{\G} = W^{\t}_{\G} W_{\G}. \qedhere 
\]
\end{proof}

We say that a graph $\G$ is \emph{controllable} if its walk matrix $W_{\G}$ is invertible. 
The theory of controllable graphs was developed in \cite{godsil2012controllable}, where it was conjectured that the proportion of graphs on $n$ vertices which are controllable goes to 1 as $n \to \infty$. 
It was later confirmed in \cite{o2016conjecture} that indeed almost all graphs are controllable. 
Wang and Xu \cite{wang2006sufficient} proved that if $\G$ and $\F$ are two walk-equivalent controllable graphs with adjacency matrices $A$ and $A'$, respectively, then $Q = W_{\G}W_{\F}^{-1}$ is an orthogonal matrix such that $Q^{\t}AQ = A'$ and $Q\1 = \1$. 
Thus we have the following consequence of \tref{c2}, \lref{walk} and the preceding remark. 

 \begin{corollary} \label{cor:iso}
     Two controllable graphs are isomorphic if and only if they are $C^2$-equivalent.
 \end{corollary}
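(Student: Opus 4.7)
The plan is to prove the two directions separately. The forward direction is immediate: isomorphic graphs satisfy the same first-order sentences, in particular the same $C^2$-sentences. For the reverse direction, assume $\G$ and $\F$ are controllable graphs with $\eeq{\G}{\F}{C^2}$, and write $A$ and $A'$ for their adjacency matrices.

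First, by \tref{c2} the graphs $\G$ and $\F$ are walk-equivalent. Since both are controllable, $W_\G$ and $W_\F$ are invertible. The Wang and Xu theorem recalled immediately before the statement then produces a well-defined orthogonal matrix $Q := W_\G W_\F^{-1}$ such that $Q^\t A Q = A'$ and $Q\1 = \1$. At this point we have an orthogonal intertwiner of the two adjacency matrices, but not yet an isomorphism.

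Next, I will extract slightly more from the proof of \lref{walk} than its stated conclusion. That proof uses the formulas $\psi^r_\ell(x)$ together with $C^2$-equivalence to construct, for each $r$ and $\ell$, a bijection between the vertices of $\G$ and $\F$ realizing the same $\psi^r_\ell$, and amalgamates these into a single permutation matrix $P$ satisfying $P W_\G = W_\F$. Since $W_\G$ is invertible, this rearranges to $Q = W_\G W_\F^{-1} = P^{-1} = P^\t$, so $Q$ is itself a permutation matrix. Substituting back into $Q^\t A Q = A'$ yields $P A P^\t = A'$, which is precisely the statement that the permutation $P$ realizes an isomorphism $\G \cong \F$.

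The main obstacle is the identification $Q = P^\t$, which hinges crucially on controllability: without invertibility of the walk matrices, the Wang--Xu theorem only produces an orthogonal intertwiner, and the permutation constructed inside the proof of \lref{walk} need not coincide with it. It is precisely the combination of these two ingredients, made possible by $W_\G$ and $W_\F$ being invertible, that upgrades the orthogonal intertwiner to a permutation and hence to an honest graph isomorphism.
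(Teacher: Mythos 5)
Your proposal is correct and follows essentially the same route as the paper: apply \tref{c2} to get walk-equivalence, invoke the Wang--Xu result to obtain the orthogonal intertwiner $Q = W_\G W_\F^{-1}$, and then reuse the permutation matrix $P$ with $PW_\G = W_\F$ extracted from the proof of \lref{walk} to conclude $Q = P^\t$ and hence $PAP^\t = A'$. The only difference is that you explicitly record the trivial forward direction, which the paper leaves implicit.
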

\begin{proof}[\emph{Proof}]
We prove that $C^2$-equivalent controllable graphs are isomorphic. Consider two controllable graphs $\G$ and $\F$ with adjacency matrices $A$ and $A'$, respectively. 
If $\G$ and $\F$ are $C^2$-equivalent, then by \tref{c2} $\G$ and $\F$ are walk-equivalent.
Hence the matrix $Q = W_{\G}W_{\F}^{-1}$ satisfies $Q^{\t}AQ = A'$.

Now, in the proof of \lref{walk} we saw that there is a permutation matrix $P$ such that $W_{\G} = P^\t W_{\F}$. 
Then we have 
\[
Q  = W_{\G}W_{\F}^{-1} = P^\t W_{\F}W_{\F}^{-1} = P^\t
\]
and thus $PAP^{\t} = A'$. 
Therefore $\G$ and $\F$ are isomorphic. 
\end{proof}

\begin{remark}
For any fragment $L$ of the first-order language of graphs, we say that a graph $\G$ is \emph{$L$-definable} if there is an $L$-sentence $\phi$ such that every graph satisfying $\phi$ is isomorphic to $\G$.
Since almost all graphs are controllable \cite{o2016conjecture}, \cref{iso} implies the classical result of Immerman and Lander \cite{immerman1990describing} that almost all graphs are $C^2$-definable. 
\end{remark}


\subsection*{Acknowledgments}
Part of this work was done during a Dagstuhl Seminar on Logic and Random Discrete Structures, Germany, in February 2022. The authors would like to thank the organizers of the workshop.

The research of A. Abiad is partially supported by the FWO grant number 1285921N. The research of O. Zapata is partially supported by the Sistema Nacional de In\-ves\-ti\-ga\-dores grant number 620178.

\bibliographystyle{acm}
\bibliography{main}
\end{document}